\newtheorem{theorem}{Theorem}
\newtheorem{lemma}{Lemma}
\newcommand{\bsc}{{\rm bsc}}
\newcommand{\bec}{{\rm bec}}
\newcommand{\bsec}{{\rm bsec}}
\newcommand{\Er}{E_{\rm r}}
\newcommand{\Eex}{E_{\rm ex}}
\newcommand{\Esp}{E_{\rm sp}}
\newcommand{\Ex}{E_{ x}}
\newcommand{\Esc}{E_{\rm sc}}
\newcommand{\Yc}{{\cal Y}}
\newcommand{\EE}{\mathbb{E}}
\newcommand{\Cc}{{\cal C}}
\newcommand{\Xc}{{\cal X}}
\title{Extremes of Error Exponents}
\author{Albert Guill\'en i F\`abregas, Ingmar Land and Alfonso Martinez 
\thanks{A. Guill\'en i F\`abregas is with the Instituci\'o Catalana de Recerca i Estudis Avan\c{c}ats (ICREA), the Department of Information and Communication Technologies, Universitat Pompeu Fabra, Barcelona, Spain, and the Department of Engineering, University of Cambridge, Cambridge, United Kingdom, email: {\tt guillen@ieee.org}. I. Land is with the Institute for Telecommunications Research, University of South Australia, Adelaide SA 5001, Australia, e-mail: {\tt ingmar.land@ieee.org}. A. Martinez  is with the Department of Information and Communication Technologies, Universitat Pompeu Fabra, Barcelona, Spain, email: {\tt alfonso.martinez@ieee.org}.}
\thanks{The material in this paper was presented in part at the 2011 IEEE International Symposium on Information Theory, St. Petersburg, Russia, July-August 2011.}
\thanks{This work has been supported in part by the International Joint Project 2008/R2 of the Royal Society, by the Australian Research Council under ARC Discovery Grant DP0986089 and by the European Research Council under ERC grant agreement 259663. A. Martinez received funding from the Ministry of Science and Innovation (Spain) under grant RYC-2011-08150 and from the European Union's 7th Framework Programme (PEOPLE-2011-CIG) under grant agreement 303633.}
}
\begin{document}

\maketitle

\begin{abstract}
This paper determines the range of feasible values of standard error exponents for binary-input memoryless symmetric channels of fixed capacity $C$ and shows that extremes are attained by the binary symmetric and the binary erasure channel. The proof technique also provides analogous extremes for other quantities related to Gallager's $E_0$ function, such as the cutoff rate, the Bhattacharyya parameter, and the channel dispersion. 
\end{abstract}


\section{Introduction}

In the context of coded communication, the channel coding theorem relates the error probability and the code rate, showing that there exist codes whose error probability tends to zero provided that the code rate is smaller than the channel capacity. For uncoded systems, the error probability and the channel capacity are also related. In particular, references \cite{hellman1970probability,land2004reliability,land2006information} show that given one of the two values, tight bounds on the other can be given for the family of binary-input memoryless and symmetric (BIMS) channels. Such channels are described by the channel transition probability $P_{Y|X}(y|x)$, where $x\in\{x_0,x_1\}$ and $y\in\Yc$. We assume that the channel output alphabet $\Yc$ has finite size, though our approach also holds for well-behaved channels with infinite alphabet size, like the binary-input additive white Gaussian noise (BIAWGN) channel. We adopt Gallager's definition of symmetric channel \cite[p.\ 94]{gallager1968ita}, that is, a channel is said symmetric if the channel transition probability matrix (rows corresponding to input values) is such that it can be partitioned in submatrices for which each row is a permutation of any other row and each column is a permutation of any other column.
Both the binary erasure channel (BEC) and the binary symmetric channel (BSC) are symmetric.
 
More precisely, references \cite{hellman1970probability,land2004reliability,land2006information} show that the uncoded error probability of any BIMS channel with capacity $C$ is upper-bounded by that of the BEC and lower-bounded by that of the BSC of the same capacity. Similar results have been found in \cite{sason2009universal,arikan2009channel} for the Bhattacharyya parameter, a simple upper bound to the uncoded error probability; here only the extremal property of the BEC was proved. In the context of iterative decoding, analogous extremal properties of the BEC and BSC have been found \cite{land2005bounds,sutskover2005extremes} for the building blocks of iterative decoders for low-density parity-check codes, namely variable-node and check-node decoders.

Upper and lower bounds to the error probability  of good codes can be given in terms of error exponents, e.g.\ Gallager's random coding bound \cite[Thm.\ 5.6.3]{gallager1968ita}, the sphere-packing bound by Shannon {\em et al.} \cite{shannon1967lower} and Arimoto's strong converse bound \cite{arimoto1973converse}. These exponents are expressed as optimization problems involving Gallager's $E_0$ function \cite[Eq.\ 5.6.14]{gallager1968ita},
\begin{align}
E_0(\rho) \triangleq -\log F(\rho),
\end{align}
where 
\begin{equation}\label{eq:F0}
F(\rho) \triangleq \EE \left[ \left(\frac{\EE\bigl[P_{Y|X}(Y|X')^{\frac{1}{1+\rho}}|Y\bigr]}{P_{Y|X}(Y|X)^{\frac{1}{1+\rho}}}\right)^\rho\right]
\end{equation}
and the pair $(X,Y)$ is distributed according to $P_X P_{Y|X}$. Here and throughout the paper $\EE[\cdot]$ denotes the expectation of a random variable and all logarithms are in base $2$.

Equiprobable inputs maximize the $E_0$ function for BIMS channels \cite[p.\ 203]{jelinek1968probabilistic}, and we henceforth assume such distribution, i.e.\ $P_X(x_1) = P_X(x_2) = \frac{1}{2}$.

In this paper, we characterize the feasible values of $E_0(\rho)$ for an arbitrary BIMS channel of fixed channel capacity $C$ and show that the $E_0$ function is upper-bounded (resp.\ lower-bounded) by that of the BEC (resp.\ BSC) of the same capacity.  Since the aforementioned exponents are expressed using the $E_0$ function, we are able to find their extremal values. In fact, our analysis leads to similar results for the cutoff rate, the Bhattacharyya parameter, the channel dispersion and to a number of other extensions.

\section{Feasible Pairs of capacity $C$ and $F(\rho)$ Function}
\label{sec:E0_result}
 
The $F(\rho)$ functions for the BEC and BSC of  erasure/crossover probability $\varepsilon$, respectively denoted by $F^\bec(\rho)$ and $F^\bsc(\rho)$, are given by
\begin{align}
F^\bec(\rho) &\triangleq 2^{-\rho}(1-\varepsilon) + \varepsilon\\
F^\bsc(\rho) &\triangleq 2^{-\rho}\left(\varepsilon^{\frac{1}{1+\rho}}+ (1-\varepsilon)^{\frac{1}{1+\rho}}\right)^{1+\rho} .
\label{eq:F0_bsc_eps}
\end{align}
Using the capacity expressions for the BEC, $C^\bec \triangleq 1-\varepsilon$, and BSC, $C^\bsc \triangleq 1-h(\varepsilon)$,  we can find the erasure/crossover probability corresponding to a given capacity $C$ and parametrize the $F^\bec(\rho)$ and function $F^\bsc(\rho)$ as functions of $C$, namely
\begin{align}
&F^\bec(\rho;C) = 1+\left(2^{-\rho} -1\right)C, \label{eq:F_bec_C}\\
&F^\bsc(\rho;C) \notag\\
&~~~~=2^{-\rho}\left(\left(h^{-1}(1-C)\right)^{\frac{1}{1+\rho}}+ \left(1-h^{-1}(1-C)\right)^{\frac{1}{1+\rho}}\right)^{1+\rho}\label{eq:F_bsc_C}
\end{align}
where $h(p) \triangleq -p\log p - (1-p)\log(1-p)$ is the binary entropy function, and $h^{-1}(x)$ denotes the inverse of $h(p)$ for $p\in[0,\frac{1}{2}]$. $C^\bec\bigl(F(\rho)\bigr), C^\bsc\bigl(F(\rho)\bigr)$ are respectively defined as the inverses of Eqs. \eqref{eq:F_bec_C}, \eqref{eq:F_bsc_C} with respect to $C$.

For BIMS channels, one has the bounds $0 \leq C \leq 1$ and $0 \leq F(\rho) \leq 1$ for $\rho \geq 0$ and $1 \leq F(\rho) \leq 2$ for $-1 < \rho \leq 0$. This is a consequence of the facts that $F(\rho)$ is non-negative and non-increasing for $\rho > -1$ \cite[App. 5B]{gallager1968ita}, that $\lim_{\rho \to -1}F(\rho) = 2$, and that $F(0) = 1$.
It is however not apparent whether further limitations exist on the feasible pairs of capacity $C$ and $F(\rho)$. Against this first impression, the next theorem tightly characterizes the set of possible pairs of capacity $C$ and $F(\rho)$ function for any BIMS channel (see Figure \ref{fig:extremes_F0_c}). In the next section, we apply this theorem and prove several analogous characterizations for other relevant quantities in the analysis of the error probability over BIMS channels.
\begin{theorem}
\label{th:extremes_F0}
For any BIMS channel with capacity $C$ and function $F(\rho)$ for $\rho>-1$, the following statements hold
\begin{enumerate}
\item the function $F(\rho)$ of the channel satisfies
\begin{equation}
F^\bec(\rho;C) \leq F(\rho) \leq F^\bsc(\rho;C);
\label{eq:F0_extr_1}
\end{equation}
\item the capacity $C$ of the channel satisfies
\begin{align}
&C^\bsc\bigl(F(\rho)\bigr) \leq C \leq C^\bec\bigl(F(\rho)\bigr) \quad \text{for } -1 < \rho \leq 0,
\label{eq:F0_extr_2} \\
&C^\bec\bigl(F(\rho)\bigr) \leq C \leq C^\bsc\bigl(F(\rho)\bigr) \quad \text{for } \rho \geq 0.
\label{eq:F0_extr_3} 
\end{align} 
\end{enumerate}
The extremes in \eqref{eq:F0_extr_1}--\eqref{eq:F0_extr_3} are attained by the BEC and the BSC. 

Furthermore, for a given pair $\bigl(C,F(\rho)\bigr)$ satisfying the inequalities in \eqref{eq:F0_extr_1} or \eqref{eq:F0_extr_2}, there exists a BIMS channel with capacity $C$ and function $F(\rho)$. Conversely, if the inequalities do not hold for the pair $\bigl(C,F(\rho)\bigr)$, there exists no such BIMS channel with capacity $C$ and function $F(\rho)$.
\end{theorem}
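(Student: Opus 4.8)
The key observation is that $F(\rho)$ for a BIMS channel is a *linear* functional of a single-letter "profile" of the channel. Because the channel is symmetric with equiprobable inputs, the quantity inside the expectation in \eqref{eq:F0} depends on the output $y$ only through the likelihood ratio $\ell(y)=P_{Y|X}(y|x_0)/P_{Y|X}(y|x_1)$, and symmetry lets us fold the output alphabet into a distribution over a scalar parameter. Concretely, I would parametrize each output (symbol pair) by a crossover-like value $z\in[0,\tfrac12]$ and show that both $C$ and $F(\rho)$ take the form $C=\EE_Q[c(z)]$ and $F(\rho)=\EE_Q[f_\rho(z)]$ for fixed kernels $c(\cdot),f_\rho(\cdot)$, where $Q$ is a probability measure on $[0,\tfrac12]$ induced by the channel. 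The BSC corresponds to $Q$ a point mass, and the BEC to a mixture of the two endpoints $z=0$ and $z=\tfrac12$. This reduces the whole theorem to a statement about the range of one linear functional subject to a fixed value of another linear functional.

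**The extremal principle.** With this setup the plan is to show that, for fixed $C=\EE_Q[c(z)]$, the functional $F(\rho)=\EE_Q[f_\rho(z)]$ is extremized by the appropriate two-point or one-point distributions. The clean way is to express $f_\rho$ as a function of $c$, i.e.\ write $f_\rho(z)=g\bigl(c(z)\bigr)$ for a function $g$ on $[0,1]$, and then invoke convexity: if $g$ is convex, Jensen gives $\EE_Q[g(c)]\ge g(\EE_Q[c])=g(C)$, and the lower bound $g(C)$ is exactly the BSC value $F^\bsc(\rho;C)$; the upper bound comes because $g$ lies below its chord on $[0,1]$ joining the endpoints, and the chord evaluated at $C$ is the BEC value $F^\bec(\rho;C)$, attained by the mixture of endpoints. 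Thus both inequalities in \eqref{eq:F0_extr_1} follow from a single convexity property of $g$ together with the linearity of $C$ in the same variable. Statements \eqref{eq:F0_extr_2}--\eqref{eq:F0_extr_3} then follow by inverting the monotone maps $F^\bec(\cdot\,;C)$ and $F^\bsc(\cdot\,;C)$ in $C$, with the direction of the inequality flipping according to the sign of $2^{-\rho}-1$, i.e.\ according to whether $\rho\gtrless 0$; this explains the reversal between \eqref{eq:F0_extr_2} and \eqref{eq:F0_extr_3}.

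**Achievability and converse.** For the final "furthermore" claim I would argue both directions explicitly. The converse is immediate: any BIMS channel yields a measure $Q$, hence a pair $(C,F(\rho))$ obeying the inequalities, so pairs outside the region are impossible. For achievability I would take the target pair $(C,F(\rho))$ inside the region and construct $Q$ as a two-point mixture: place mass on one interior point to match $C$ and adjust a second point (or mix with an endpoint) to hit the prescribed $F(\rho)$; the intermediate-value theorem on the continuous curve traced between the BSC point and the BEC chord guarantees a solution precisely when the pair lies between the two extremal curves. One must check that the resulting $Q$ corresponds to a genuine BIMS channel — but a finite mixture of BSC-type output pairs is always realizable as a symmetric channel, so this step is constructive.

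**Main obstacle.** The crux is establishing that $g=f_\rho\circ c^{-1}$ is convex on $[0,1]$ for every $\rho>-1$, since the sign of $\rho$ changes the regime ($F\le 1$ versus $F\ge 1$) and the relevant kernel $f_\rho(z)=2^{-\rho}\bigl(z^{1/(1+\rho)}+(1-z)^{1/(1+\rho)}\bigr)^{1+\rho}$ is a composition of powers whose second derivative, after the change of variables through $c(z)=1-h(z)$, is not obviously signed. I expect the bulk of the technical work to be a careful computation verifying this convexity uniformly in $\rho$, most likely by differentiating $g$ twice and reducing to an inequality between the curvatures of $f_\rho$ and $c$, or equivalently by checking that $F^\bsc(\rho;C)$ is a convex function of $C$ while $F^\bec(\rho;C)$ is affine in $C$ — the latter being manifest from \eqref{eq:F_bec_C}.
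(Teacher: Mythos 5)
Your overall architecture is exactly the paper's: fold the symmetric channel into a mixture of BSC subchannels indexed by a scalar parameter, observe that both $C$ and $F(\rho)$ are expectations (linear functionals) of fixed kernels over that mixture, apply Jensen together with a chord bound to the function $C\mapsto F^\bsc(\rho;C)$, and establish achievability/connectedness via a two-parameter BSEC-type interpolation between the BSC and the BEC. The paper does precisely this, with the curvature computation relegated to an appendix and the interpolating family taken to be the binary symmetric-erasure channel.

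However, there is a genuine sign error that reverses your main conclusion. You claim that $g=f_\rho\circ c^{-1}$, i.e.\ $F^\bsc(\rho;C)$ as a function of $C$, is \emph{convex}, and you conclude $F(\rho)\geq F^\bsc(\rho;C)$ with the chord giving the upper bound $F^\bec(\rho;C)$. This is the reverse of \eqref{eq:F0_extr_1}, and the convexity claim is false: the function is \emph{concave} in $C$ for every $\rho>-1$ (this is the paper's Lemma~\ref{lem:F-is-concave}, proved by a somewhat delicate two-variable monotonicity argument in Appendix~\ref{app:F-is-concave}). A quick check at $\rho=1$ confirms concavity: $F^\bsc(1;C)=\tfrac12\bigl(1+2\sqrt{\varepsilon(1-\varepsilon)}\bigr)$ with $\varepsilon=h^{-1}(1-C)$ equals $0.813$ at $C=0.5$, which lies \emph{above} the chord value $1-C/2=0.75$; a convex function would lie below its chord. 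With concavity in place, Jensen gives $\EE[F^\bsc(\rho;C(A))]\leq F^\bsc(\rho;\EE[C(A)])$ and the chord (which a concave function dominates) gives the lower bound, so the inequalities come out as in the theorem. Since you repeat the convexity claim consistently — including in your description of the ``main obstacle'' — this is not a typo but a wrong anticipated sign; the remedy is simply to carry out the curvature computation you defer, which yields concavity and flips your two bounds into the correct ones. The rest of your plan (monotone inversion for part~2, the converse being immediate, and the intermediate-value construction for achievability) then goes through as in the paper.
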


\subsection{Proof of Theorem \ref{th:extremes_F0}}

The proof is built around the idea that every BIMS channel admits a decomposition into subchannels that are BSCs. This decomposition follows directly from Gallager's definition of symmetric channels \cite[p.\ 94]{gallager1968ita} as used in this paper. A formal description may be found e.g. in \cite{land2005bounds,land2006information}. Here we deem identical the BEC with erasure probability 1 and the BSC with crossover probability $\frac{1}{2}$. In this decomposition, each channel output $Y$ is associated with an index $A=f(Y)$ which is independent of the input and depends on the channel output only. We denote by $P_A(a)$ the probability mass or density function of subchannel $a$, and by $\Yc(a)$ the corresponding binary output alphabet of the BSC with index $a$. Assuming such a decomposition, and since $P_{Y|X}(y|x) = P_{Y|X,A}(y|x,a)P_A(a)$ \cite{land2005bounds,land2006information} we have
 \begin{align}
F(\rho) &= \EE \left[ \left(\frac{\EE\bigl[P_{Y|X}(Y|X')^{\frac{1}{1+\rho}}|Y\bigr]}{P_{Y|X}(Y|X)^{\frac{1}{1+\rho}}}\right)^\rho\right]\\
&=\EE\left[\EE \left[ \left(\frac{\EE\bigl[P_{Y|X,A}(Y|X',A)^{\frac{1}{1+\rho}}|Y,A\bigr]}{P_{Y|X,A}(Y|X,A)^{\frac{1}{1+\rho}}}\right)^\rho\Bigg| ~A\right]\right]\\
&=\EE \left[F^\bsc(\rho;C(A))\right],
\end{align}
where $C(a)$ denotes the capacity of subchannel $a$.

The following lemma is proved in Appendix \ref{app:F-is-concave}.

\begin{lemma}
  \label{lem:F-is-concave}
  The function $F^\bsc(\rho;C)$ is concave in $C\in [0,1]$ for any $\rho > -1$, non-decreasing for $-1 < \rho \leq 0$, and non-increasing for $\rho \geq 0$.
\end{lemma}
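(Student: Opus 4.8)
The plan is to reparametrize everything by the crossover probability. Writing $p=h^{-1}(1-C)\in[0,\tfrac12]$ and $s\triangleq\tfrac{1}{1+\rho}>0$, we have $C=1-h(p)$ and
\[
F^\bsc(\rho;C)=2^{-\rho}g(p),\qquad g(p)\triangleq\bigl(p^{s}+(1-p)^{s}\bigr)^{1/s}.
\]
First I would settle monotonicity. Differentiating gives $g'(p)=q^{1/s-1}\bigl(p^{s-1}-(1-p)^{s-1}\bigr)$ with $q\triangleq p^{s}+(1-p)^{s}>0$, so on $(0,\tfrac12)$ the sign of $g'$ is that of $p^{s-1}-(1-p)^{s-1}$: positive for $s<1$ (i.e.\ $\rho>0$) and negative for $s>1$ (i.e.\ $-1<\rho<0$). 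Since $h'(p)=\log\tfrac{1-p}{p}>0$ there, we have $\tfrac{dp}{dC}=-1/h'(p)<0$, and the chain rule $\tfrac{dF}{dC}=2^{-\rho}g'(p)\tfrac{dp}{dC}$ yields $\tfrac{dF}{dC}\le0$ for $\rho\ge0$ and $\tfrac{dF}{dC}\ge0$ for $-1<\rho\le0$, as claimed (with $g\equiv1$ and equality throughout at $\rho=0$).

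For concavity I would use the parametric second-derivative formula $\tfrac{d^{2}F}{dC^{2}}=2^{-\rho}\bigl(g''h'-g'h''\bigr)/(h')^{3}$, so that, as $h'(p)>0$ on $(0,\tfrac12)$, concavity is equivalent to $g''h'-g'h''\le0$. The $g''$ computation simplifies remarkably: the combination $\bigl(p^{s-1}-(1-p)^{s-1}\bigr)^{2}-q\bigl(p^{s-2}+(1-p)^{s-2}\bigr)$ that appears telescopes to exactly $-\bigl(p(1-p)\bigr)^{s-2}$, giving the closed form
\[
g''(p)=(s-1)\,q^{1/s-2}\bigl(p(1-p)\bigr)^{s-2}.
\]
Inserting this together with $h''(p)=-\tfrac{1}{\ln2}\,\tfrac{1}{p(1-p)}$ and dividing out the positive factor $\tfrac{1}{\ln2}\,q^{1/s-2}\bigl(p(1-p)\bigr)^{s-2}$, the inequality $g''h'-g'h''\le0$ reduces to the scalar statement
\[
(s-1)\ln\tfrac{1-p}{p}+(1-2p)+p^{s}(1-p)^{1-s}-(1-p)^{s}p^{1-s}\le0 .
\]

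The main step is to prove this for every $p\in(0,\tfrac12)$ and every $s>0$. I would substitute $v\triangleq\tfrac{1-p}{p}>1$; multiplying through by $1+v>0$ brings the inequality to $\Theta(v;s)\le0$, where
\[
\Theta(v;s)\triangleq(s-1)(1+v)\ln v+(v-1)+v^{1-s}-v^{s},
\]
and one checks $\Theta(1;s)=0$ and $\Theta(v;1)=0$. The decisive trick is to differentiate in $s$ rather than in $v$, which gives the clean expression
\[
\frac{\partial\Theta}{\partial s}=\ln v\,\bigl[(1+v)-v^{s}-v^{1-s}\bigr].
\]
Since $s\mapsto v^{s}+v^{1-s}$ is convex with minimum at $s=\tfrac12$ and with equal endpoint values $v^{0}+v^{1}=1+v$ at $s=0$ and $s=1$, the bracket is $\ge0$ for $s\in[0,1]$ and $\le0$ for $s\ge1$; as $\ln v>0$, the map $s\mapsto\Theta(v;s)$ is therefore nondecreasing on $[0,1]$ and nonincreasing on $[1,\infty)$. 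Because it vanishes at $s=1$, we conclude $\Theta(v;s)\le0$ for all $s>0$, which is exactly what was needed.

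I expect the concavity claim to be the main obstacle, and within it the scalar inequality $\Theta\le0$. A head-on attack by repeatedly differentiating in $v$ stalls, because $v=1$ is a triple root of $\Theta$ (one verifies $\Theta=\partial_v\Theta=\partial_v^{2}\Theta=0$ there), so a low-order Taylor expansion is uninformative. What unlocks the argument is to exploit the exactly solvable anchor $\rho=0$ (equivalently $s=1$), where $F^\bsc$ is constant in $C$, and to transport the correct sign monotonically in $s$ away from that anchor using the convexity of $v^{s}+v^{1-s}$.
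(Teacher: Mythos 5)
Your proof is correct and follows essentially the same route as the paper's: after the chain rule and the inverse function theorem, both arguments reduce concavity to the same scalar inequality (your $\Theta(v;s)\le 0$ is the paper's $g(z,\rho)\ge 0$ under $z=1/v$, $s=1/(1+\rho)$), and both establish it by differentiating in the Gallager parameter and anchoring at the exactly solvable point $\rho=0$, i.e.\ $s=1$. The only difference is cosmetic: where the paper determines the sign of the bracket $(1+z)-z^{s}-z^{1-s}$ by differentiating once more and splitting into three intervals of $\rho$, you invoke the convexity of $s\mapsto v^{s}+v^{1-s}$ with equal endpoint values at $s=0,1$ directly, which is marginally cleaner.
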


Noting that $\EE[C(A)] = C$, and given the concavity of the function $F^\bsc(\rho;C)$, we apply Jensen's inequality to obtain  
\begin{align}
    F(\rho) 
    &=  \EE \Bigl[ F^\bsc(\rho;C(A)) \Bigr]  \\
    &\leq F^\bsc(\rho; \EE[C(A)] )  \\
    &=   F^\bsc(\rho;C)  .
 \end{align}
The bound is obviously achieved when the channel is a BSC.  

Since $F^\bsc(\rho;C)$ is concave, we can lower-bound it by a straight line joining the points $F^\bsc(\rho;0)$ ($C=0$) and $F^\bsc(\rho;1)$ ($C=1$) (see Figure \ref{fig:extremes_F0_c}), and then evaluate the expectation, i.e., 
  \begin{align}
    F^\bsc(\rho;C)
    &\geq F^\bsc(\rho;0) + C\left( F^\bsc(\rho;1) - F^\bsc(\rho;0) \right)  \\
    &=   1 + C(2^{-\rho} - 1)    \\
    &=   F^\bec(\rho;C).
 \end{align}
This bound is obviously achieved when the channel is a BEC, thus proving Eq. \eqref{eq:F0_extr_1}.
\begin{figure}[t]
\begin{center}
\includegraphics[width=1.0\columnwidth]{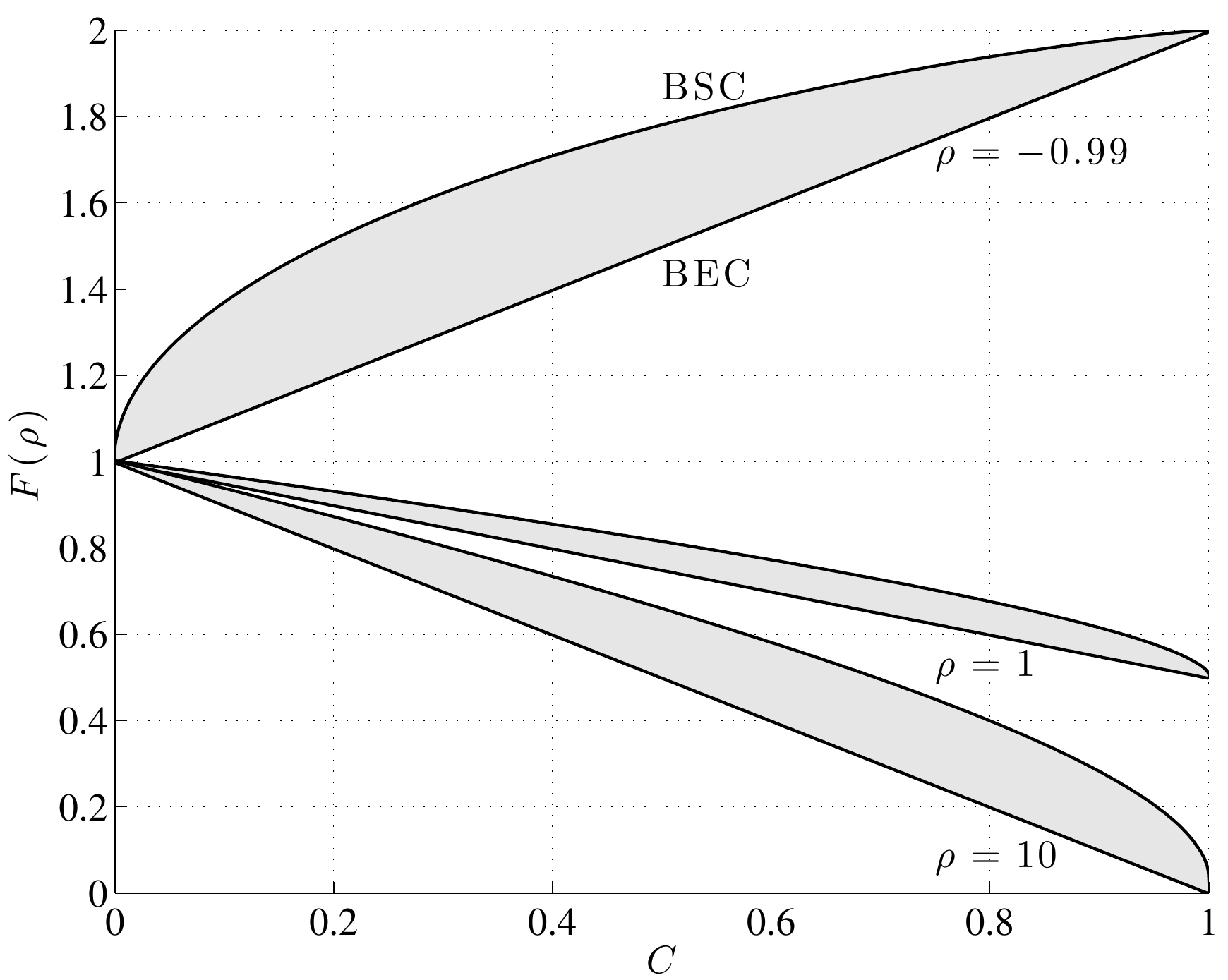}
\caption{Region of feasible points $\bigl(C,F(\rho)\bigr)$ for $\rho = -0.99, \, 1,\, 10$. The upper curves correspond to the BSC and the lower straight lines to the BEC.}
\label{fig:extremes_F0_c}
\end{center}
\end{figure}

Eq. \eqref{eq:F0_extr_1} determines the boundaries of the region of feasible pairs $\bigl(C,F(\rho)\bigr)$. Since $F^\bsc(\rho;C)$ is concave and $F^\bec(\rho;C)$ is convex, the region of feasible pairs is convex. Moreover, the functions $F^\bsc(\rho;C)$ and $F^\bec(\rho;C)$ are non-decreasing for $-1 < \rho \leq 0$ and non-increasing for $\rho \geq 0$. Fixing the value of $F(\rho)$, the convexity of the region implies Eq. \eqref{eq:F0_extr_2}.

We next prove that the region of feasible pairs $\bigl(C,F(\rho)\bigr)$ is connected by constructing a BIMS channel with corresponding capacity $C$ and function $F(\rho)$. Consider a binary symmetric-erasure channel (BSEC) with input alphabet $\{x_0,x_1\}$, output alphabet $\{y_0,y_1,y_\mathrm{e}\}$, cross-over probability $\varepsilon_\mathrm{s}$ and erasure probability $\varepsilon_\mathrm{e}$. Its transition probabilities are given by $P_{Y|X}(y_0|x_0) = P_{Y|X}(y_1|x_1) = 1-\varepsilon_\mathrm{e}-\varepsilon_\mathrm{s}$, $P_{Y|X}(y_0|x_1) = P_{Y|X}(y_1|x_0) = \varepsilon_\mathrm{s}$, and $P_{Y|X}(y_\mathrm{e}|x_0) = P_{Y|X}(y_\mathrm{e}|x_1) = \varepsilon_\mathrm{e}$. The capacity $C^\bsec$ and function $F^\bsec(\rho)$ are respectively
\begin{align}
  C^\bsec &= (1-\varepsilon_\mathrm{e})\Biggl(1 - h\biggl(\frac{\varepsilon_\mathrm{s}}{1-\varepsilon_\mathrm{s}-\varepsilon_\mathrm{e}}\biggr)\Biggr)\label{eq:C_bsec}\\
 F^\bsec(\rho) &= 2^{-\rho}\bigl(\varepsilon_\mathrm{s}^{\frac{1}{1 + \rho}} + (1 - \varepsilon_\mathrm{s} - \varepsilon_\mathrm{e})^{\frac{1}{1 + \rho}}\bigr)^{1 + \rho} +\varepsilon_\mathrm{e} .\label{eq:F_bsec}
\end{align}
For fixed $C$, there exist several BSEC channels with capacity $C$, among them a BSC and a BEC. Each of them is characterized by a pair of probabilities $\varepsilon_\mathrm{s}$ and $\varepsilon_\mathrm{e}$. The corresponding $F^\bsec(\rho)$ function is given by Eq. \eqref{eq:F_bsec}. Since the function $F^\bsec(\rho)$ is continuous in $\varepsilon_\mathrm{s}$ and $\varepsilon_\mathrm{e}$, one can always find a BSEC with capacity $C$ whose function $F^\bsec(\rho)$ coincides with the desired $F(\rho)$.

\subsection{Applications}
\label{sec:extensions}

In the proof of Theorem \ref{th:extremes_F0}, we exploited the fact that the region of feasible pairs $\bigl(C,F(\rho)\bigr)$ is convex and connected to characterize the extreme values of the capacity $C$ or the function $F(\rho)$. In this section, we apply the theorem to provide extreme values for other relevant quantities in the error probability analysis of channel coding. A simple extension to channel parameters $G$ given by $G(\rho) = g\bigl(F(\rho)\bigr)$, where $g(\cdot)$ is a monotonic continuous function, will prove convenient. 

\begin{theorem}
\label{th:extremes_general}
Let the channel parameter $G$ be given by $G(\rho) = g\bigl(F(\rho)\bigr)$, where $g(\cdot)$ is a monotonic strictly increasing continuous function. For any BIMS channel, we have that
\begin{enumerate}
\item the channel parameter $G(\rho)$ satisfies
\begin{equation}
g\bigl(F^\bec(\rho;C)\bigr) \leq G(\rho) \leq g\bigl(F^\bsc(\rho;C)\bigr); 
\label{eq:F0_extr_11}
\end{equation}
\item the channel capacity $C$ satisfies
\begin{align}
&C^\bsc\bigl(G(\rho)\bigr) \leq C \leq C^\bec\bigl(G(\rho)\bigr)\quad \text{for } -1 < \rho \leq 0,
\label{eq:F0_extr_22} \\
&C^\bec\bigl(G(\rho)\bigr) \leq C \leq C^\bsc\bigl(G(\rho)\bigr) \quad \text{for } \rho \geq 0.
\label{eq:F0_extr_33} 
\end{align} 
\end{enumerate}
The inequalities \eqref{eq:F0_extr_11}--\eqref{eq:F0_extr_33} are reversed if $g(\cdot)$ is monotonic, strictly decreasing and continuous. 
\end{theorem}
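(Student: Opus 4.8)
The plan is to obtain Theorem~\ref{th:extremes_general} directly from Theorem~\ref{th:extremes_F0} by transporting the latter's inequalities through the map $g$. Being strictly monotonic and continuous, $g$ is a bijection onto its range, so it preserves (resp.\ reverses) inequalities when increasing (resp.\ decreasing) and commutes with inversion with respect to $C$; this is the only structural property I would use.

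I would first treat the strictly increasing case. Applying $g$ to the chain \eqref{eq:F0_extr_1}, $F^\bec(\rho;C)\le F(\rho)\le F^\bsc(\rho;C)$, and using $G(\rho)=g\bigl(F(\rho)\bigr)$ together with monotonicity of $g$, immediately gives $g\bigl(F^\bec(\rho;C)\bigr)\le G(\rho)\le g\bigl(F^\bsc(\rho;C)\bigr)$, which is \eqref{eq:F0_extr_11}; the extremes are still attained by the BEC and BSC, as they attain the extremes of $F(\rho)$. For the capacity statement I would introduce the composed boundary functions $G^\bec(\rho;C)\triangleq g\bigl(F^\bec(\rho;C)\bigr)$ and $G^\bsc(\rho;C)\triangleq g\bigl(F^\bsc(\rho;C)\bigr)$. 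For increasing $g$ these inherit the monotonicity in $C$ of their $F$-counterparts recorded in Lemma~\ref{lem:F-is-concave}, namely non-decreasing for $-1<\rho\le 0$ and non-increasing for $\rho\ge 0$, so strict monotonicity and continuity of $g$ make them invertible in $C$; I would then \emph{define} $C^\bec\bigl(G(\rho)\bigr),C^\bsc\bigl(G(\rho)\bigr)$ as these inverses at $G(\rho)$ and invert the two inequalities of \eqref{eq:F0_extr_11} with respect to $C$, exactly as \eqref{eq:F0_extr_2}--\eqref{eq:F0_extr_3} were obtained from \eqref{eq:F0_extr_1}, to reach \eqref{eq:F0_extr_22}--\eqref{eq:F0_extr_33}. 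Since $g$ is injective, $G^\bec(\rho;C)=G(\rho)\Leftrightarrow F^\bec(\rho;C)=F(\rho)$, so these inverses coincide numerically with the quantities of Theorem~\ref{th:extremes_F0}.

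For the strictly decreasing case I would repeat both steps with the opposite sign. Now $g$ reverses \eqref{eq:F0_extr_1} into $g\bigl(F^\bsc(\rho;C)\bigr)\le G(\rho)\le g\bigl(F^\bec(\rho;C)\bigr)$, i.e.\ \eqref{eq:F0_extr_11} read with reversed inequalities. At the same time, composing the monotone functions $F^\bec(\rho;\cdot),F^\bsc(\rho;\cdot)$ with a decreasing $g$ flips the monotonicity of $G^\bec(\rho;\cdot),G^\bsc(\rho;\cdot)$ in $C$. I would then carry both sign changes through the same inversion step to determine the resulting directions of the capacity inequalities.

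The argument is conceptually a routine transport of Theorem~\ref{th:extremes_F0}, and I anticipate no analytic difficulty. The step I would check most carefully—and the one genuine obstacle—is the direction of the capacity inequalities in the decreasing case: both the reversal that $g$ induces on \eqref{eq:F0_extr_1} and the flip in the monotonicity of $G^\bec(\rho;\cdot),G^\bsc(\rho;\cdot)$ enter the inversion and partially offset each other, so the net effect (and its interaction with the sign of $\rho$) must be tracked explicitly rather than assumed. Verifying exactly how these two competing sign changes combine is where a sign error is most likely and is the heart of the proof.
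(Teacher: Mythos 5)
Your proposal is correct and follows essentially the same route as the paper, which states Theorem~\ref{th:extremes_general} as an immediate corollary of Theorem~\ref{th:extremes_F0} obtained by transporting its inequalities through the monotone continuous bijection $g$, with no separate proof given. The one step you flag as delicate---the direction of the capacity inequalities for decreasing $g$---is already resolved by your own observation that injectivity of $g$ gives $G^\bsc(\rho;C')=G(\rho)\Leftrightarrow F^\bsc(\rho;C')=F(\rho)$ (and likewise for the BEC): the reversal of \eqref{eq:F0_extr_11} and the flipped monotonicity of $G^\bsc(\rho;\cdot),G^\bec(\rho;\cdot)$ in $C$ cancel exactly, so the feasible set of capacities for a given $G(\rho)$ coincides numerically with that of Theorem~\ref{th:extremes_F0} and the ``reversal'' of \eqref{eq:F0_extr_22}--\eqref{eq:F0_extr_33} is purely a relabelling of which boundary curve is called BSC-side versus BEC-side.
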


\paragraph*{Gallager's function} By letting $g(x) = -\log(x)$, the previous theorem readily gives the extremes of Gallager's function $E_0(\rho) = -\log F(\rho)$ for a fixed capacity, and the extremes of the capacity for a fixed $E_0$. 

\paragraph*{Cutoff rate} A particular case of the $E_0$ function is the cutoff rate, given by $R_0 = E_0(1)$. Thus, the above result also gives the extremes of the cutoff rate. 

\paragraph*{Bhattacharyya parameter}
A related quantity is the Bhattacharyya parameter $Z$, given by
\begin{align}
Z = \sum_{y\in\Yc} \sqrt{P_{Y|X}(y|x_0)P_{Y|X}(y|x_1)} = 2 F(1) -1.
\end{align}
The BSC/BEC have the largest/smallest possible Bhattacharyya parameter for BIMS channels of capacity $C$, interestingly giving the reverse extremes of the uncoded error probability \cite{hellman1970probability,land2004reliability,land2006information}. This result recovers Sason's \cite{sason2009universal} and Ar\i kan's \cite{arikan2009channel} bound for the BEC, and provides the extreme in the other direction attained by the BSC.
Fig.~\ref{fig:extremes_CZ} shows the bounds to $C$ for a given value of $Z$ from Theorem \ref{th:extremes_general}, as well as Ar\i kan's generic bounds for binary-input discrete memoryless channels \cite[Eqs. (1), (2)]{arikan2009channel}, illustrating some improvement.
\begin{figure}[t]
\begin{center}
\includegraphics[width=1.0\columnwidth]{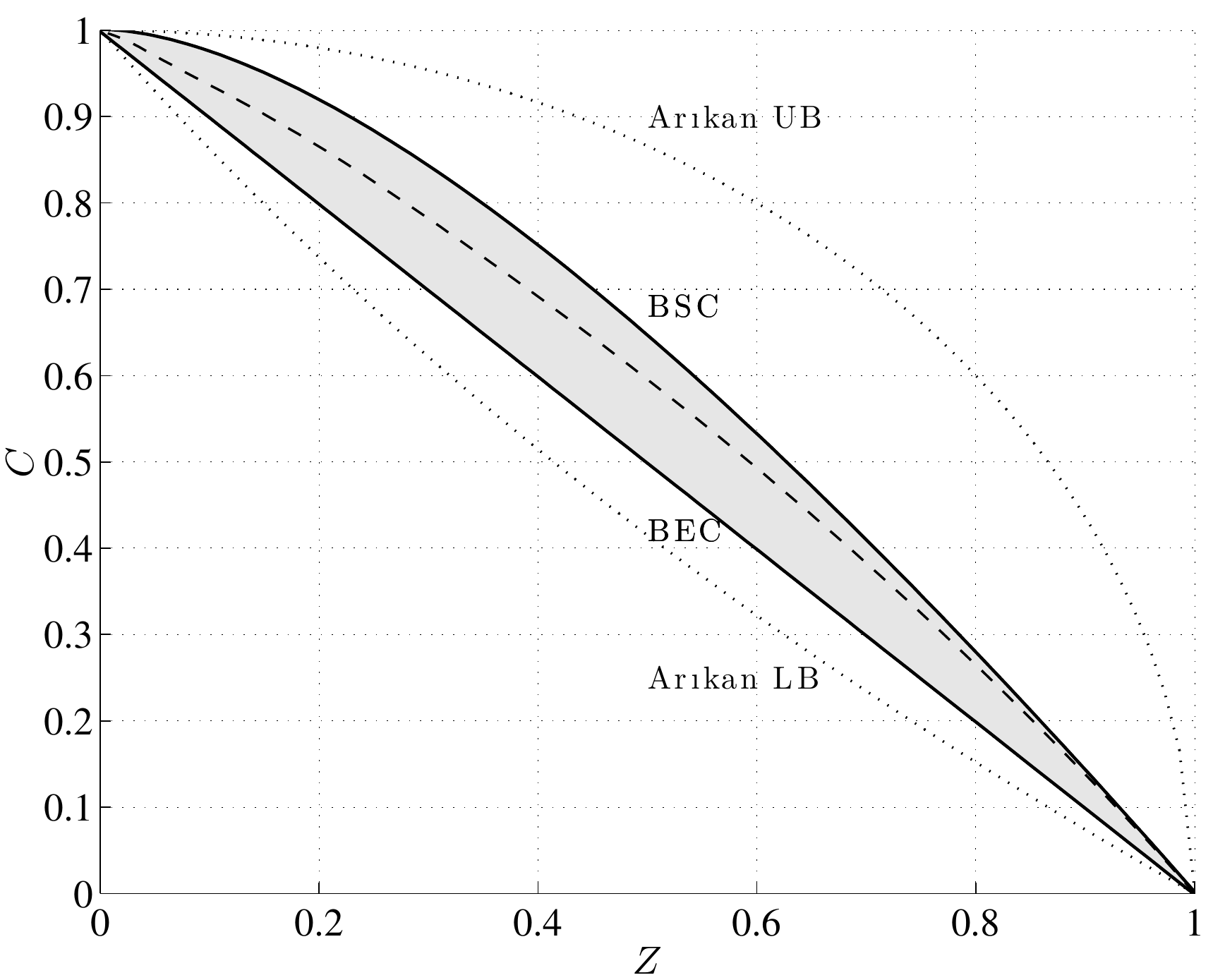}
\caption{Upper and lower bounds to the capacity $C$ as a function of the  Bhattacharyya parameter $Z$. Ar\i kan's upper and lower bounds \cite[Eqs. (1), (2)]{arikan2009channel} and the BIAWGN channel curve (dashed line) are also shown for reference. }
\label{fig:extremes_CZ}
\end{center}
\end{figure}

\paragraph*{Random coding exponent} The random coding exponent $\Er(R)$ \cite[Sect.\ 5.6]{gallager1968ita}, given by
\begin{equation}\label{eq:random_coding_exp}
\Er(R) = \max_{\substack{0\leq\rho \leq1}} E_0(\rho) -\rho R,
\end{equation}
provides an upper bound to the error probability of codes of rate $R$. This exponent involves a maximization of a function that, for fixed $\rho$ falls under the conditions for applicability of Theorems~\ref{th:extremes_F0} and~\ref{th:extremes_general}. Therefore, the exponent $\Er(R)$ satisfies
\begin{equation}
\Er^\bsc(R;C) \leq \Er(R) \leq \Er^\bec(R;C).
\label{eq:Er_extr}
\end{equation}
Figure \ref{fig:error_exponents_extremes_c} illustrates the extremes of random-coding error exponents $\Er(R)$. The random-coding error exponent of an arbitrary BIMS channel must lie in the shaded area, two such examples are the BIAWGN channel of the same capacity (with and without fading). 
\begin{figure}[htbp]
\begin{center}
\includegraphics[width=1.0\columnwidth]{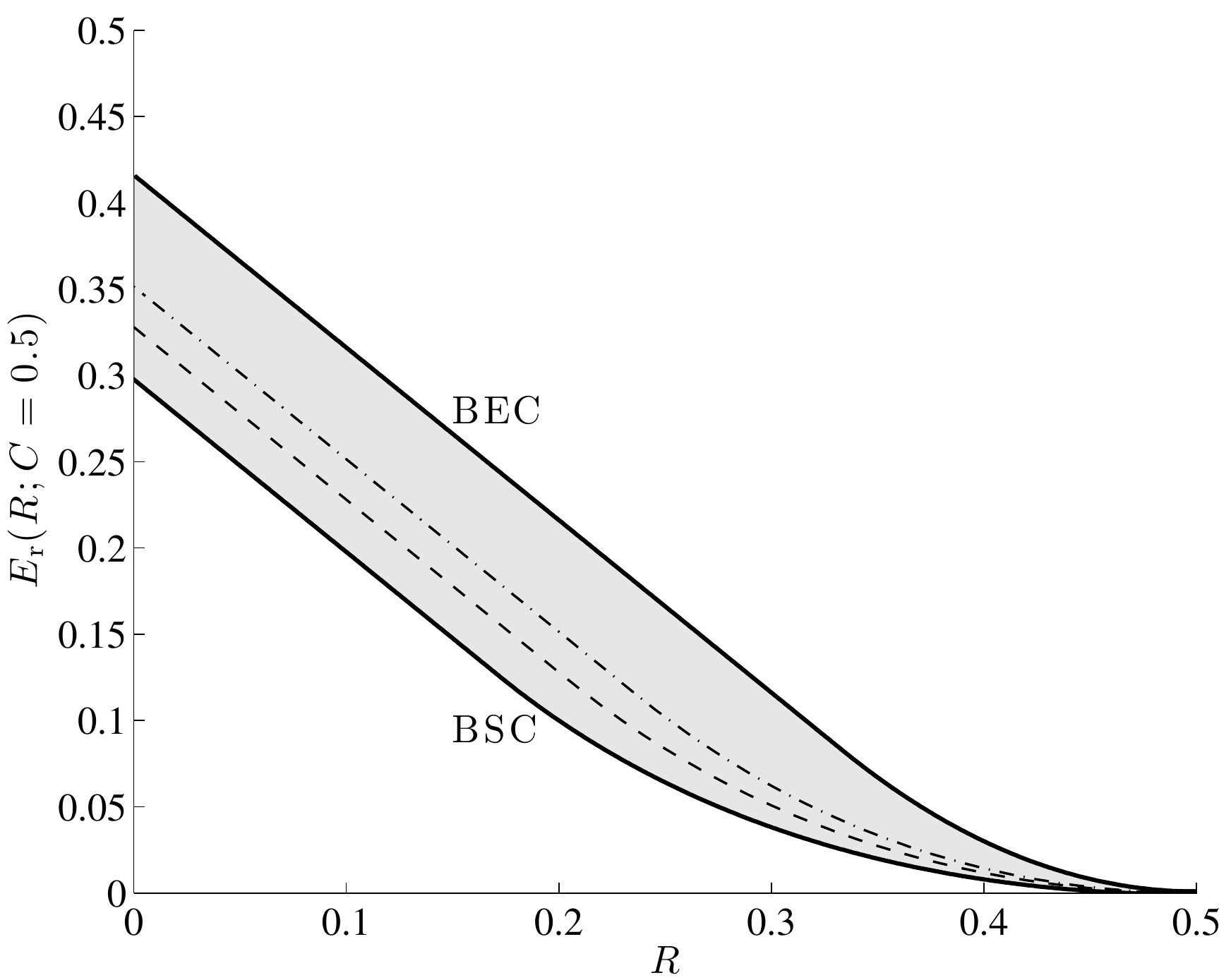}
\caption{Random coding error exponents of the BEC, BSC, BIAWGN (dashed), and Rayleigh fading BIAWGN (dash-dotted).}
\label{fig:error_exponents_extremes_c}
\end{center}
\end{figure}

\paragraph*{Expurgated error exponent} For rates below the channel critical rate, the expurgated error exponent $\Eex(R)$ \cite[Sect.\ 5.7]{gallager1968ita},  given by
\begin{equation}\label{eq:expurgated_coding_exp}
\Eex(R) = \max_{\substack{\rho \geq 1}} \Ex(\rho) -\rho R,
\end{equation}
provides a tighter estimate of the error probability of good codes than the random-coding exponent. The function $\Ex(\rho)$ is expressed in terms of the Bhattacharyya parameter $Z$ as
\begin{equation}\label{eq:expurgated_coding_exp0}
\Ex(\rho) = - \rho \log\frac{1+Z^{\frac{1}{\rho}}}{2}.
\end{equation}
Theorem~\ref{th:extremes_general} provides the extremes of the expurgated exponent.

\paragraph*{Strong converse exponent} In \cite{arimoto1973converse}, Arimoto lower-bounded the error probability of block codes at rates above capacity  in terms of the function $\Esc(R)$ given by
\begin{equation}
\Esc(R)\triangleq \sup_{-1<\rho \leq 0} E_0\left(\rho\right) - \rho R.
\label{eq:arimoto_exponent}
\end{equation}
Theorem~\ref{th:extremes_general} also provides the extremes of this exponent.

\paragraph*{Sphere-packing exponent} The error probability of codes of rate $R$ is lower-bounded by a bound that depends on the sphere-packing exponent \cite{shannon1967lower} $\Esp(R)$, given by
\begin{equation}
\Esp(R) \triangleq \sup_{\rho> 0} E_0(\rho)-\rho R.
\label{eq:sp_exponent}
\end{equation}
Again, Theorem~\ref{th:extremes_general} provides the extremes of this exponent.

\paragraph*{Threshold-decoding error exponents}
The exponent of random-coding bounds based on threshold decoding can also be expressed in closed form. Shannon \cite{shannon1957certain} derived the exponent of Feinstein's bound to the error probability \cite{feinstein1954new}. More generally, the exponent corresponding to a generalized form of Feinstein's bound \cite{martinez2011random} can be expressed as
\begin{equation}
\Er^{\rm gfb}(R) = \sup_{\rho\geq0} \frac{E_0(\rho) - \rho R}{1+\rho}.
\end{equation}
Theorem \ref{th:extremes_general} directly gives the error exponent extremes for the generalized Feinstein's bound.

The exponent of the dependence-testing bound \cite{polyanskiy2010channel} is \cite{martinez2011random}
\begin{align}
\Er^{\rm dtb}(R) &= \max_{0\leq \rho\leq 1} E_0(\rho,s=1) - \rho R,
\end{align}
where $E_0(\rho,s) \triangleq -\log F(\rho,s)$, for $s \geq 0$, and
\begin{equation}\label{eq:F0_1}
F(\rho,s) \triangleq \EE \left[ \left(\frac{\EE\bigl[P_{Y|X}(Y|X')^s|Y\bigr]}{P_{Y|X}(Y|X)^s}\right)^\rho\right].
\end{equation}
Following similar and somewhat simpler steps to those in the proof of Lemma \ref{lem:F-is-concave}, one can prove that $F(\rho,s=1)$, evaluated for a BSC with capacity $C$, is concave in $C$. Therefore, Theorem~\ref{th:extremes_general} holds and shows that the exponent of the DT bound has similar extreme values.

\paragraph*{Channel dispersion}
Recently, the Gaussian approximation to the error probability $P_e$ of length-$n$ codes at rates close to the capacity has received renewed attention. In this approximation, a critical channel parameter is the dispersion $V$, which for BIMS channels \cite{shannon1957certain,polyanskiy2010channel} is given by 
\begin{equation}
V = \frac{1}{\Er''(R=C)} = -E_0''(\rho=0).
\label{eq:V_E0}
\end{equation}
Moreover, it can be proved that one can choose either the $E_0(\rho)$ function or the simpler $E_0(\rho,1)$ to compute the latter derivative, that is $E_0''(\rho=0) = E_0''(\rho=0,1)$. As proved in Appendix \ref{app:bounded_i3}, the third derivative of $E_0(\rho,1)$ at $\rho=0$ is bounded for BIMS channels. Thus, a second-order Taylor expansion of $E_0(\rho,1)$ around $\rho =0$ shows that $E_0''(0,1)$ has the same extremes as $E_0(\rho,1)$. As illustration, Figure \ref{fig: extremes_dispersion_V} depicts the possible values of channel dispersion as a function of the capacity $C$ of the BIMS channel. The dashed line, which lies within the shaded area indicating the feasible region of pairs capacity/dispersion, corresponds to the BIAWGN channel.
\begin{figure}[t]
\begin{center}
\includegraphics[width=1.0\columnwidth]{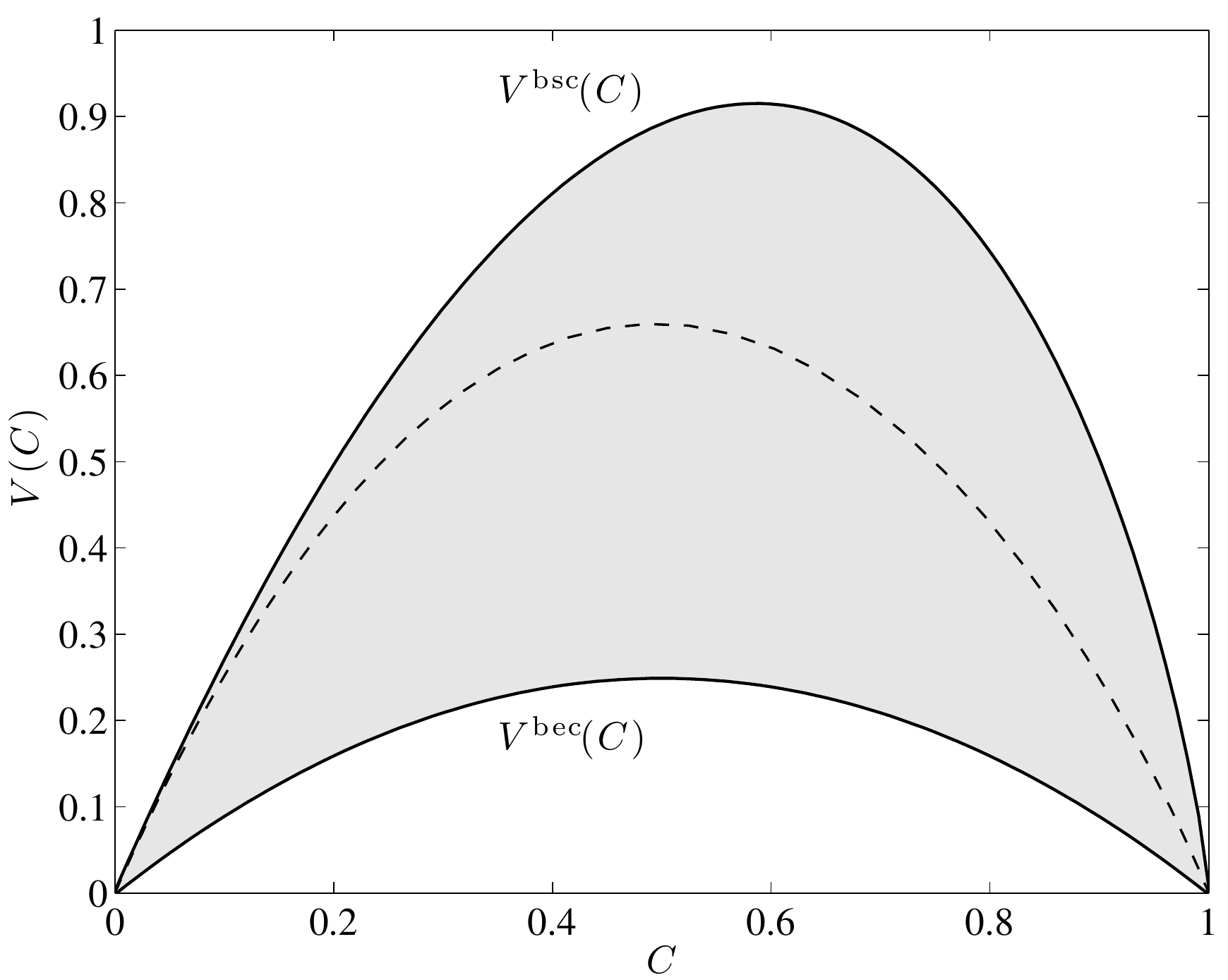}
\caption{Extremes of the channel dispersion $V(C)$.}
\label{fig: extremes_dispersion_V}
\end{center}
\end{figure}

\paragraph*{Error probability of specific codes} Our Theorems may also be applied to specific codes $\Cc$ with a given distance spectrum by means of the Shulman-Feder bound \cite{shulman1999random} (see also \cite{sason2006pal}) given by
\begin{equation}
-\frac{1}{n}\log P_e(\Cc) \geq \Er\left(R+\frac{1}{n}\log\alpha_\Cc\right)
\end{equation}
where $\alpha_\Cc$
is a function of the distance spectrum of the code that quantifies how far the distance spectrum of $\Cc$ is from that of the ensemble average.

\paragraph*{Exact error probability}
One might wonder whether our extremal results extend to the actual error probability. The answer is not immediately obvious. For uncoded transmission (a  code of length $n = 1$ and rate $R = 1$) over a given BIMS channel of capacity $C$ the error probability $P_b$ is upper- and lower-bounded by that of the BEC and the BSC, respectively, $h^{-1}(1-C) \leq P_b \leq \frac{1}{2}(1-C)$ \cite{land2004reliability}. In contrast, the extremes of the exponential bounds to the error probability, including the Bhattacharyya parameter, are reversed. This phenomenon suggests the existence of a pair $(n,R_n)$ such that a crossing point occurs, in the sense that for rates above (resp.\ below) $R_n$ the extremes may be those of uncoded transmission (resp.\ the error exponents).

\paragraph*{Connection with Ar\i kan, Telatar \cite{note_emre2008}, and Alsan \cite{alsan2012}} 
Unpublished work by Ar\i kan and Telatar \cite{note_emre2008} uncovered results of similar nature to those reported in this paper, showing that for channels with a fixed rate $R_\rho\triangleq \frac{\partial E_0(\rho)}{\partial \rho}$, for $0\leq \rho \leq 1$, the random coding exponent satisfies
\begin{equation}
\Er^\bec(R_\rho) \leq \Er(R_\rho) \leq \Er^\bsc(R_\rho). 
\end{equation} 
For $\rho = 0$ we have that $R_\rho = C$ and we obtain the trivial result that $0 = \Er^\bec(R_\rho) \leq \Er(R_\rho) \leq \Er^\bsc(R_\rho) = 0$. Instead, our results compare channels of a fixed capacity and provide the extremal values of the random-coding exponent and other quantities. The suitability of either of these two approaches to the problem may depend on the specific application. 
A more recent result by Alsan \cite{alsan2012} recovers both Theorem~\ref{th:extremes_F0} in this paper and the results in \cite{note_emre2008} as particular cases, for BIMS channels in the interval $0\leq \rho\leq 1$.

\appendices
\section{Proof of Lemma \ref{lem:F-is-concave}}
\label{app:F-is-concave}
We aim at proving the concavity of the function
\begin{equation}
f(C) = \left(\varepsilon(C)^{\frac{1}{1+\rho}}+ \bigl(1-\varepsilon(C)\bigr)^{\frac{1}{1+\rho}}\right)^{1+\rho}, 
\end{equation}
where $\varepsilon(C)$ is itself a function of $C$, namely $\varepsilon = h^{-1}(1-C)$. Without loss of generality, we limit our attention to the interval $\varepsilon\in[0,1/2]$.
The function is concave if $\frac{d^2f}{dC^2} \leq 0$.

Applying the chain rule of derivation, we have that
\begin{align}
 \frac{df}{dC} &= \frac{df}{d\varepsilon}\frac{d\varepsilon}{dC} \\
 \frac{d^2f}{dC^2} &= \frac{d^2f}{d\varepsilon^2}\biggl(\frac{d\varepsilon}{dC}\biggr)^2 + \frac{df}{d\varepsilon}\frac{d^2\varepsilon}{dC^2}.
\end{align}
Direct computation gives 
\begin{align}
 \frac{df}{d\varepsilon} &= \left(\varepsilon^{\frac{1}{1+\rho}}+ (1-\varepsilon)^{\frac{1}{1+\rho}}\right)^{\rho} \left(\varepsilon^{\frac{-\rho}{1+\rho}}- (1-\varepsilon)^{\frac{-\rho}{1+\rho}}\right)\\
\frac{d^2f}{d\varepsilon^2} &= -\frac{\rho}{1+\rho} \frac{\left(\varepsilon^{\frac{1}{1+\rho}}+ (1-\varepsilon)^{\frac{1}{1+\rho}}\right)^{\rho-1} }{\left(\varepsilon(1-\varepsilon)\right)^{\frac{1+2\rho}{1+\rho}}}.
\label{eq:2nd}
\end{align}
An application of the inverse function theorem yields 
\begin{align}
\frac{d\varepsilon}{dC} &= \frac{1}{\log\frac{\varepsilon}{1-\varepsilon}}\\
\frac{d^2\varepsilon}{dC^2} &= - \frac{1}{\varepsilon(1-\varepsilon)\left(\log\frac{\varepsilon}{1-\varepsilon}\right)^3 \ln 2}.
\end{align}
The derivatives with respect to $C$ are therefore given by
\begin{align}
&\frac{df}{dC} = \left(\varepsilon^{\frac{1}{1+\rho}}+ (1-\varepsilon)^{\frac{1}{1+\rho}}\right)^{\rho} \left(\varepsilon^{\frac{-\rho}{1+\rho}}- (1-\varepsilon)^{\frac{-\rho}{1+\rho}}\right) \frac{1}{\log\frac{\varepsilon}{1-\varepsilon}}\\
&\frac{d^2f}{dC^2} = -\frac{\left(\varepsilon^{\frac{1}{1+\rho}}+ (1-\varepsilon)^{\frac{1}{1+\rho}}\right)^{\rho-1} }{ \left(\log\frac{\varepsilon}{1-\varepsilon}\right)^2~ \left(\varepsilon(1-\varepsilon)\right)^{\frac{1+2\rho}{1+\rho}}} \notag\\
&\left(\frac{\rho}{1+\rho} + \frac{1-2\varepsilon + \varepsilon^{\frac{1}{1+\rho}} (1-\varepsilon)^{\frac{\rho}{1+\rho}} -\varepsilon^{\frac{\rho}{1+\rho}} (1-\varepsilon)^{\frac{1}{1+\rho}}}{ \ln\left(\frac{\varepsilon}{1-\varepsilon}\right)}\right).
\end{align}
Since we have that $\frac{df}{dC} \geq 0$ for $-1 < \rho \leq 0$ and $\frac{df}{dC} \leq 0$ for $\rho \geq 0$, we conclude that $f(C)$ is increasing and decreasing in the respective ranges of $\rho$.

The term before the brackets
\begin{equation}
-\frac{\left(\varepsilon^{\frac{1}{1+\rho}}+ (1-\varepsilon)^{\frac{1}{1+\rho}}\right)^{\rho-1} }{ \left(\log\frac{\varepsilon}{1-\varepsilon}\right)^2~ \left(\varepsilon(1-\varepsilon)\right)^{\frac{1+2\rho}{1+\rho}}} 
\end{equation}
is always non-positive for $-1\leq \rho\leq \infty$. Therefore, it suffices to show that the function
\begin{equation}
g(\varepsilon,\rho)  \triangleq \frac{\rho}{1+\rho} + \frac{1-2\varepsilon + \varepsilon^{\frac{1}{1+\rho}} (1-\varepsilon)^{\frac{\rho}{1+\rho}} -\varepsilon^{\frac{\rho}{1+\rho}} (1-\varepsilon)^{\frac{1}{1+\rho}}}{ \ln\left(\frac{\varepsilon}{1-\varepsilon}\right)}
\label{eq:g_func}
\end{equation}
is non-negative for $\varepsilon \in [0,\frac{1}{2}]$ and $-1\leq \rho\leq \infty$. 

Let $z\triangleq \frac{\varepsilon}{1-\varepsilon} \in [0,1]$. With this change of variables we obtain
\begin{equation}
 g(z,\rho) = \frac{\rho}{1+\rho}+ \frac{1-z+z^{\frac{1}{1+\rho}} - z^{\frac{\rho}{1+\rho}}}{(1+z)\ln z}.
\label{eq:g_z}
\end{equation}
We wish to show that $g(z,\rho)\geq 0$. The partial derivative with respect to $\rho$ is given by
\begin{align}
\frac{\partial g(z,\rho)}{\partial \rho} &= \frac{1}{(1+\rho)^2}\left(\frac{1+z-  z^{\frac{1}{1+\rho}} -  z^{\frac{\rho}{1+\rho}}}{1+z}\right)\\
&\triangleq \frac{1}{(1+z)(1+\rho)^2}~g_0(z,\rho)
\end{align}
We are interested in the sign of $g_0(z,\rho)$, whose derivative is in turn given by
\begin{align}
\frac{\partial g_0(z,\rho)}{\partial \rho} &= \frac{ \ln z}{(1+\rho)^2}\left(z^{\frac{1}{1+\rho}} - z^{\frac{\rho}{1+\rho}}\right).
\end{align}
We readily see that
\begin{align}
\frac{\partial g_0(z,\rho)}{\partial \rho} &\geq 0, ~~~~~\rho\in(-1,1],\\ 
\frac{\partial g_0(z,\rho)}{\partial \rho} &\leq 0,~~~~~\rho\in[1,+\infty).
\end{align}

Summarizing, since $g(z,\rho)$ is continuous in $\rho$ for $\rho>-1$, we have that
\begin{itemize}
\item $\frac{\partial g(z,\rho)}{\partial \rho} \leq 0$ in $\rho\in(-1,0]$,  since $g_0(z,\rho)$ is non-decreasing and $g_0(z,\rho) \leq g_0(z,0)=0$,
\item $\frac{\partial g(z,\rho)}{\partial \rho} \geq 0$ in $\rho\in[0,1]$, since $g_0(z,\rho)$ is non-decreasing and $g_0(z,\rho) \geq g_0(z,0)=0$,
\item $\frac{\partial g(z,\rho)}{\partial \rho} \geq 0$ in $\rho\in[1,\infty)$, since $g_0(z,\rho)$ is non-increasing and $g_0(z,\rho)\geq \displaystyle\lim_{\rho\to\infty }g_0(z,\rho)=0$.
\end{itemize}
The fact that $g(z,0)=0$ concludes the proof.

\section{}
\label{app:bounded_i3}

We wish to prove that the partial derivative $\frac{\partial^3E_0(\rho,s=1) }{\partial \rho^3}\bigr |_{\rho=0}$ is bounded. To this end, we first note that the function $E_0(\rho,s=1)$ can be expressed as
\begin{align}
E_0(\rho,s=1) 
&= -\log \EE\left[2^{-\rho \,i(X;Y)}\right],
\end{align}
where $i(x;y)$ is the information density, defined as
\begin{equation}
i(x;y) \triangleq \log \frac{P_{Y|X}(y|x)}{P_Y(y)}.
\end{equation}
The function $E_0(\rho,s=1)$  is a cumulant generating function. Its third derivative evaluated at $\rho = 0$ gives the third-order cumulant, that is the third-order central moment,
\begin{equation}
\frac{\partial^3E_0(\rho,s=1) }{\partial \rho^3}\biggr |_{\rho=0} = \EE\left[\bigl(i(X;Y) - I(X;Y)\bigl)^3\right] (\ln 2)^2.
\end{equation}

The next result shows that the $k$-th absolute moment of the information density is bounded.
\begin{lemma}
\label{lemma:mink}
Consider a memoryless channel with discrete input alphabet $\Xc$ and arbitrary output alphabet $\Yc$. Then, with equiprobable inputs we have
\begin{equation}
\EE\left[\bigl|i(X;Y) - I(X;Y)\bigl|^k\right]\leq \left(2\log|\Xc| + \frac{k}{\ln2}\bigl(1 + |\Xc|^{\frac{1}{k}}\bigr)\right)^k.
\end{equation}
\end{lemma}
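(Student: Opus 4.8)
The plan is to bound the $k$-th central absolute moment by reducing it, via the triangle inequality in $L^k$ (Minkowski), to separate and more tractable bounds on the positive part, the negative part, and the mean of the information density. Writing $\|\cdot\|_k$ for the $L^k(P_{XY})$ norm and noting that $I(X;Y)=\EE[i(X;Y)]$ is a constant, Minkowski gives $\|i(X;Y)-I(X;Y)\|_k\le\|i(X;Y)\|_k+|I(X;Y)|$, and decomposing $|i|=i_++i_-$ into its positive and negative parts and applying the triangle inequality once more yields $\|i\|_k\le\|i_+\|_k+\|i_-\|_k$. So it suffices to bound $\|i_+\|_k$, $\|i_-\|_k$ and $|I(X;Y)|$ individually, the final constant being their sum raised to the $k$-th power.

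I would dispatch the two easy terms using the equiprobable-input structure. Since $P_Y(y)=\frac{1}{|\Xc|}\sum_{x'}P_{Y|X}(y|x')\ge\frac{1}{|\Xc|}P_{Y|X}(y|x)$, the likelihood ratio obeys $P_{Y|X}(y|x)/P_Y(y)\le|\Xc|$, so $i(x;y)\le\log|\Xc|$ pointwise; hence $i_+\le\log|\Xc|$ and $\|i_+\|_k\le\log|\Xc|$. Likewise $I(X;Y)$ is a mutual information with a $|\Xc|$-ary input, so $0\le I(X;Y)\le\log|\Xc|$. Together these account for the $2\log|\Xc|$ contribution in the claimed bound.

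The crux is the negative part $i_-$, which is \emph{unbounded}: the log-likelihood ratio can be arbitrarily negative when $P_{Y|X}(y|x)\ll P_Y(y)$, so no pointwise bound is available and I must control its moment through an exponential moment instead. The device is the elementary inequality $t^k\le(k/\ln2)^k\,2^{t}$ for $t\ge0$ (equivalently $\max_{t\ge0}t^k2^{-t}\le(k/\ln2)^k$), applied pointwise to $i_-$, which gives $\EE[i_-^k]\le(k/\ln2)^k\,\EE[2^{i_-}]$. The equiprobable structure then pays off again: since $2^{i_-}=\max\bigl(1,\,P_Y(Y)/P_{Y|X}(Y|X)\bigr)$ and $P_{Y|X}(y|x)\le|\Xc|P_Y(y)$, one has $\EE[2^{i_-}]=\sum_{x,y}\frac{1}{|\Xc|}\max\bigl(P_{Y|X}(y|x),P_Y(y)\bigr)\le\sum_{x,y}P_Y(y)=|\Xc|$, using $\sum_y P_Y(y)=1$. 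Taking $k$-th roots gives $\|i_-\|_k\le\frac{k}{\ln2}|\Xc|^{1/k}$, and collecting the three pieces with $(a+b)^{1/k}\le a^{1/k}+b^{1/k}$ produces $\|i-I\|_k\le 2\log|\Xc|+\frac{k}{\ln2}|\Xc|^{1/k}\le 2\log|\Xc|+\frac{k}{\ln2}\bigl(1+|\Xc|^{1/k}\bigr)$, which is the stated bound after raising to the $k$-th power.

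I expect the negative-part estimate to be the only genuine obstacle. The key realization is that replacing the awkward moment $\EE[i_-^k]$ of an unbounded quantity by the \emph{exponential} moment $\EE[2^{i_-}]$ makes the problem collapse through the normalization $\sum_y P_Y(y)=1$ and the crude likelihood-ratio bound $P_{Y|X}\le|\Xc|P_Y$; crucially, this step never invokes $|\Yc|$, so the argument survives for arbitrary (including continuous or countably infinite) output alphabets, exactly as the lemma requires. The remaining work—the one-line maximization behind $t^k2^{-t}\le(k/\ln2)^k$ and the bookkeeping that turns the three constants into the single closed form—is routine.
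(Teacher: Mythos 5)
Your proof is correct, and it rests on the same two pillars as the paper's: Minkowski's inequality to peel off $I(X;Y)$ together with a bounded contribution worth $2\log|\Xc|$, followed by an elementary inequality that trades the $k$-th moment of an unbounded logarithm for a first moment of the likelihood ratio, which collapses to $|\Xc|$ under equiprobable inputs. The realizations differ in detail, though. You split $|i|$ into positive and negative parts, bound $i_+\leq\log|\Xc|$ pointwise, and control $\EE[i_-^k]$ via $t^k\leq(k/\ln 2)^k\,2^t$ and the exponential moment $\EE[2^{i_-}]\leq|\Xc|$; the paper instead keeps the single non-negative quantity $\log\bigl(\sum_{x'}P_{Y|X}(Y|x')/P_{Y|X}(Y|X)\bigr)$ and applies $\ln x\leq k(x^{1/k}-1)$, which after one more application of Minkowski yields the $\frac{k}{\ln 2}\bigl(1+|\Xc|^{1/k}\bigr)$ term. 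The two elementary inequalities are essentially the same fact (a logarithm grows slower than any power), but your bookkeeping is marginally tighter: you actually establish $\|i-I\|_k\leq 2\log|\Xc|+\frac{k}{\ln 2}|\Xc|^{1/k}$ and only then relax to the stated constant, whereas the paper's route genuinely incurs the extra additive $\frac{k}{\ln 2}$. One cosmetic remark: the subadditivity $(a+b)^{1/k}\leq a^{1/k}+b^{1/k}$ you invoke at the end is not needed anywhere --- the three norm bounds add directly. Your observation that no step depends on $|\Yc|$ is also correct and matches the lemma's hypothesis of an arbitrary output alphabet.
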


\begin{proof}
We will make use of Minkowski's inequality
$
\|A+B\|_k\leq  \|A\|_k + \|B\|_k
$
where
$
\|A\|_k \triangleq (\EE[|A|^k])^{\frac{1}{k}}.
$
Using the definition of $i(X,Y)$, we now have that
\begin{align}
&\|i(X;Y)- I(X;Y)\|_k \notag\\
&\leq\left\| \log \frac{\sum_{x'}\frac{1}{|\Xc|}P_{Y|X}(Y|x')}{P_{Y|X}(Y|X)}\right\|_k+ I(X;Y)\\
&\leq 2\log |\Xc| + \left\| \log \frac{\sum_{x'}P_{Y|X}(Y|x')}{P_{Y|X}(Y|X)}\right\|_k\\
&\leq 2\log |\Xc|  + \frac{1}{\ln2}\left\| k \left(\frac{\sum_{x'}P_{Y|X}(Y|x')}{P_{Y|X}(Y|X)}\right)^\frac{1}{k} -k\right\|_k    \\
&\leq  2\log |\Xc|  + \frac{k}{\ln2} + \frac{k}{\ln2} \left( \EE\left[\frac{\sum_{x'}P_{Y|X}(Y|x')}{P_{Y|X}(Y|X)}\right]\right)^\frac{1}{k}\\
&\leq  2\log |\Xc|  + \frac{k}{\ln2} + \frac{k}{\ln2} |\Xc|^\frac{1}{k}
\end{align}
where we have used that $\ln x \leq k(x^\frac{1}{k}-1)$ \cite[Eq. (4.1.37)]{abramowitz1964hmf}. 
\end{proof}

Using Lemma \ref{lemma:mink}, we have that for BIMS channels, 
\begin{equation}
\frac{\partial^3E_0(\rho,s=1) }{\partial \rho^3}\biggr |_{\rho=0} \leq (\ln 2)^2\left(2 + \frac{3}{\ln2}(1+2^{\frac{1}{3}})\right)^3 .
\end{equation}


\begin{thebibliography}{10}

\bibitem{hellman1970probability}
M.~Hellman and J.~Raviv,
\newblock ``Probability of error, equivocation, and the {C}hernoff bound,''
\newblock {\em IEEE Trans. Inf. Theory}, vol. 16, no. 4, pp. 368--372, Jul.
  1970.

\bibitem{land2004reliability}
I.~Land,
\newblock {\em {Reliability Information in Channel Decoding: Practical Aspects
  and Information Theoretical Bounds}},
\newblock Ph.D. thesis, University of Kiel, Germany, 2004.

\bibitem{land2006information}
I.~Land and J.~Huber,
\newblock ``{Information combining},''
\newblock {\em Foundations and Trends in Communications and Information
  Theory}, vol. 3, no. 3, pp. 227--330, 2006.

\bibitem{gallager1968ita}
R.~G. Gallager,
\newblock {\em {Information Theory and Reliable Communication}},
\newblock John Wiley \& Sons, Inc. New York, NY, USA, 1968.

\bibitem{sason2009universal}
I.~Sason,
\newblock ``On universal properties of capacity-approaching {LDPC} code
  ensembles,''
\newblock {\em IEEE Trans. Inf. Theory}, vol. 55, no. 7, pp. 2956--2990, Jul.
  2009.

\bibitem{arikan2009channel}
E.~Ar{\i}kan,
\newblock ``Channel polarization: A method for constructing capacity-achieving
  codes for symmetric binary-input memoryless channels,''
\newblock {\em IEEE Trans. Inf. Theory}, vol. 55, no. 7, pp. 3051--3073, Jul.
  2009.

\bibitem{land2005bounds}
I.~Land, S.~Huettinger, P.~A. Hoeher, and J.~B. Huber,
\newblock ``Bounds on information combining,''
\newblock {\em IEEE Trans. Inf. Theory}, vol. 51, no. 2, pp. 612--619, Feb.
  2005.

\bibitem{sutskover2005extremes}
I.~Sutskover, S.~Shamai, and J.~Ziv,
\newblock ``Extremes of information combining,''
\newblock {\em IEEE Trans. Inf. Theory}, vol. 51, no. 4, pp. 1313--1325, Apr.
  2005.

\bibitem{shannon1967lower}
C.~E. Shannon, R.~G. Gallager, and E.~R. Berlekamp,
\newblock ``{Lower bounds to error probability for coding on discrete
  memoryless channels. I},''
\newblock {\em Information and Control}, vol. 10, no. 1, pp. 65--103, 1967.

\bibitem{arimoto1973converse}
S.~Arimoto,
\newblock ``On the converse to the coding theorem for discrete memoryless
  channels,''
\newblock {\em IEEE Trans. Inf. Theory}, vol. 19, no. 3, pp. 357--359, May
  1973.

\bibitem{jelinek1968probabilistic}
F.~Jelinek,
\newblock {\em {Probabilistic Information Theory}},
\newblock McGraw-Hill New York, 1968.

\bibitem{shannon1957certain}
C.~E. Shannon,
\newblock ``{Certain results in coding theory for noisy channels},''
\newblock {\em Information and Control}, vol. 1, no. 1, pp. 6--25, 1957.

\bibitem{feinstein1954new}
A.~Feinstein,
\newblock ``A new basic theorem of information theory,''
\newblock {\em IRE Trans. Inf. Theory}, vol. 4, no. 4, pp. 2--22, 1954.

\bibitem{martinez2011random}
A.~Martinez and A.~Guill\'en~i F\`abregas,
\newblock ``Random-coding bounds for threshold decoders: Error exponent and
  saddlepoint approximation,''
\newblock in {\em 2011 IEEE Int. Symp. Inf. Theory, Saint Petersburg, Russia},
  Jul.-Aug. 2011.

\bibitem{polyanskiy2010channel}
Y.~Polyanskiy, H.~V. Poor, and S.~Verd{\'u},
\newblock ``{Channel coding rate in the finite blocklength regime},''
\newblock {\em IEEE Trans. Inf. Theory}, vol. 56, no. 5, pp. 2307--2359, May
  2010.

\bibitem{shulman1999random}
N.~Shulman and M.~Feder,
\newblock ``Random coding techniques for nonrandom codes,''
\newblock {\em IEEE Trans. Inf. Theory}, vol. 45, no. 6, pp. 2101--2104, Sept.
  1999.

\bibitem{sason2006pal}
I.~Sason and S.~Shamai,
\newblock ``{Performance Analysis of Linear Codes Under Maximum-likelihood
  Decoding: A Tutorial},''
\newblock {\em Foundations and Trends on Communications and Information
  Theory}, vol. 3, no. 1-2, 2006.

\bibitem{note_emre2008}
E.~Ar{\i}kan and E.~Telatar,
\newblock ``{BEC and BSC are $E_0$} extremal,''
\newblock {\em unpublished technical note}, 2008.

\bibitem{alsan2012}
M.~Alsan,
\newblock ``Extremality properties for {G}allager's random coding exponent,''
\newblock in {\em 2012 IEEE Int. Symp. Inf. Theory, Cambridge MA, USA}, Jul.
  2012, pp. 2944--2948.

\bibitem{abramowitz1964hmf}
M.~Abramowitz and I.~A. Stegun,
\newblock {\em {Handbook of mathematical funcions}},
\newblock Dover Publications, 1964.

\end{thebibliography}
\end{document}